\newcommand{\be}{\begin{equation}}
\newcommand{\ee}{\end{equation}}
\newcommand{\meanover}[2]{E_{#1}{#2}}
\newcommand{\se}{\{e\}}
\newcommand{\si}{\{i\}}
\begin{document}
%\title[Valuing Data]{Valuing Data for Frequency-Based Decision-Making and Nearest-Neighbor Classifiers}
\title[Data Dividend]{Computing a Data Dividend}
\author{Eric Bax}

\begin{abstract}
Quality data is a fundamental contributor to success in statistics and machine learning. If a statistical assessment or machine learning leads to decisions that create value, then data contributors may want a share of that value. This paper presents methods to assess the value of individual data samples, and of sets of samples, to apportion value among different data contributors. We use Shapley values for individual samples and Owen values for combined samples, and we show that these values can be computed in polynomial time in spite of their definitions having numbers of terms that are exponential in the number of samples. 
\end{abstract}

\maketitle

\pagebreak

%Later maybe: 
%Note on cheating -- putting in agreeing examples or copies or near-copies of the same example or even random examples and then withdrawing those that garner negative value.
%Note that value can be negative -- not really so strange: mis-measurement-caused outliers, exceptions. Some classifiers may do more smoothing, which may spread the value and decrease negative valuations (for example using more neighbors can make each example less likely to tip the balance, and an exceptional one less likely to cause a bad decision.
%Having to pay for data may give companies large and established companies an advantage, because it can add to the cost of starting a new company. 

\section{Introduction}
Many organizations utilize data about individuals to perform various functions. In many cases, this involves analyzing data that is aggregated over many people, to make decisions that lead to actions that generate profits or losses. Search, email, and social media providers' use of their users' data to select advertisements to show their users \cite{edelman07,varian09,levin11} (often in exchange for providing their services free of charge) has prompted calls for users to be paid a \textit{data dividend} \cite{au-yeung19} -- a portion of the profits generated using their data. This paper details methods to compute a data dividend, either on a sample-by-sample basis or for a collection of data that forms part of the data set used to make decisions. 

The methods presented in this paper apply to data dividends for profits derived from advertising to users and also to data dividends for other ways of generating profits from data. Search, email, and social media providers can, and in some cases do, generate profits from data through methods other than advertising. Examples include using aggregate email data, such as emailed receipts for purchases, as input to buy-sell decisions in equity markets \cite{alt_data,quandl_email_receipts,quandl_email_gopro} or for market research \cite{kooti17,kooti17b} that can support such decisions. Similarly, aggregated search and social media data can be analyzed to make effective economic forecasts \cite{hellerstein12,damuri17,onder15}, which can also support buy-sell decisions, leading to profits or losses. 

Other types of organizations also derive benefits from aggregated data. Insurance companies use aggregated data to set rates, and medical organizations use it (often on a smaller scale) for clinical trials and to predict disease outbreaks (a function that can benefit from search, email, and social media data as well \cite{sharpe16,alessa18}). Many retailers use customer relationship management (CRM) systems to automatically determine which offers to extend to which customers, based on analysis of data aggregated over customers. In Iceland, individuals' medical, genetic, and genealogical data have been aggregated for use in development of medical products as well as for anthropological discoveries \cite{wired_icelanders,gudbjar15,ed15,ebenes18}. 

Organizations may argue that individuals or groups who provide data are already compensated adequately, because they receive services in exchange for their data. However, increasingly, users are becoming concerned with data privacy, as evidenced by legal changes such as the GDPR \cite{gdpr16} in the European Union, congressional hearings on how internet-based companies use user data in the United States, and even comments by Tim Cook, CEO of Apple, on behalf of users of Apple devices, in favor of giving users more control over and awareness about how their data is used \cite{cook18}. As users learn more about privacy issues, they are beginning to understand that their data drives value, and they may soon begin to insist on receiving a share of that value. 

The methods to compute data dividends presented in this paper are based on the idea that some data may be more valuable than other data for making decisions that drive profits. So the methods are based on concepts from economic mechanism design \cite{mcmillan03,nisan07,krishna02,milgrom05} that were designed to reward members or subgroups of coalitions for the value that they contribute to their coalitions. The concepts include Shapley values \cite{shapley} to compute values for individual samples of data and Owen values \cite{owen77} to compute the share of gains or losses to allocate to each contributor of multiple samples.

Sections \ref{review_shapley} and \ref{review_perm} offer brief reviews of Shapley and Owen values and of some basic results regarding permutations that we will use for derviations in later sections. Section \ref{freq_based} describes methods to compute data dividends for decision processes that rely on frequency-based analysis, for example deciding whether to take an action based on average outcomes for similar actions under similar circumstances in the past. Section \ref{knn} presents methods to compute data dividends for decision processes that use nearest-neighbor models for analysis -- models that employ voting over samples to decide which action to take. Section \ref{discussion} concludes with a discussion of potential areas of interest for future work. 

% CRM systems -- my data helps you sell to others like me.
% DNA database, from data coalition paper.
% Everyday use of data, wittingly or unwittingly.
% Politicians calling for a data dividend.
% Mention all the good properties of Shapley and Owen values. Here or in the next section.

\section{Review: Shapley Values and Owen Values} \label{review_shapley}
We will use Shapley values \cite{shapley} to value the contribution of individual data examples and Owen values \cite{owen77} to value the contribution of sets of examples. So we first review these values. Both involve averages over a number of terms that is exponential in the number of participants, but our methods will collect terms so that computation is feasible. 

Imagine that a group of participants will work together to generate some value. How should we reward each of them? One idea is to award them in the order that they agree to join the effort, and give them the marginal value they generate by joining. However, this may lead to an impasse. For example, if we value a completed jigsaw puzzle but have no value for a partially completed one, then we would only award whoever places the last piece. So no one would have an incentive to put the first two pieces together, or to do any work on the puzzle to get to having just one piece left to fit. Even if they did, everyone would be surreptitiously hiding pieces in an attempt to have the last one. If you have had this happen to you, then you will understand the problem. 

To avoid that type of problem, Shapley's insight was to derive the value for a participant by averaging over all possible orders in which the participants could have decided to participate. In each ordering, we assess the marginal contribution of the participant should they decide to join after the participants that are before them in the ordering and before the participants that are after them in the ordering. In the case of doing a puzzle together, each piece comes last in an equal number of the permutations of puzzle pieces: for $n$ pieces, $(n-1)!$, which is $\frac{1}{n}$ of the $n!$ permutations. So each piece has an equal Shapley value, which is the value of completing the puzzle divided by the number of pieces. 

A general equation for the Shapley value for participant $i$ is:
\be
\meanover{\sigma \in P}{\left[v(S_i(\sigma) \cup \si) - v(S_i(\sigma))\right]}, \label{shapley_def}
\ee
where $P$ is the set of permutations of $1, \ldots, n$ for $n$ participants, $\sigma$ denotes permutations, $S_i(\sigma)$ is the set of entries before $i$ in $\sigma$, and $v(S)$ is the value if the participants indexed by $S$ all participate, and the others do not.

Sometimes groups of participants form coalitions, and each coalitions decides as a group whether to participate. Owen values address this challenge by averaging over permutations of coalitions and, within the participant's coalition, over fellow coalition members. This gives a value for each participant, and the value for a coalition is the sum of its participants' values. Let $m$ be the number of coalitions, and let $C_1, \ldots, C_m$ be the sets of participant indices in the coalitions. Then the Owen value for participant $i$ in coalition $h$ is:
\be
\meanover{\sigma_C \in P_C}{ \meanover{\sigma_h \in P_h}{\left[v\left(\cup_{j \in S_h(\sigma_C)} C_j \cup S_i(\sigma_h) \cup \si \right) - v\left(\cup_{j \in S_h(\sigma_C)} C_j \cup S_i(\sigma_h) \right) \right]} }, \label{owen_def}
\ee
where $P_C$ is the set of permutations of $1, \ldots, m$; $S_h(\sigma_C)$ is the set of entries before $h$ in permutation $\sigma_C$; $P_h$ is the set of permutations of the participant indices in $C_h$, and $S_i(\sigma_h)$ is the set of indices before $i$ in $\sigma_h$. Continuing the jigsaw puzzle example, if each person helping put together the puzzle first attaches a set of pieces to each other to form a region of the puzzle, then the regions are joined together, then the Owen value for a piece is the expectation, over permutations of the regions paired with permutations of the pieces within that piece's region, of the marginal value produced if that piece is added to the combination of the regions before its region in the inter-region permutation and the pieces before it in the intra-region permutation. The Owen value for the coalition is the sum of the Owen values over its participants. In the puzzle case, the Owen value for each person putting together the puzzle is the sum of the Owen values for the pieces in their region.

\section{Some Useful Lemmas on Permutations} \label{review_perm}
Since Shapley values and Owen values are expectations over permutations, computing them efficiently will require gathering terms over permutations. We will use the following two lemmas to prove results about gathering those terms. (A note on notation: we will say "a permutation of $S$," where $S$ is a set, to mean a permutation of the elements of $S$.)

\begin{lemma} \label{lemma_one}
For $T \subseteq S$, each permutation of $T$ is the ordering of the elements of $T$ in an equal number of the permutations of $S$.
\end{lemma}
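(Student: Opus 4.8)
The plan is to prove the statement by directly counting, for an arbitrary fixed ordering $\tau$ of the elements of $T$, how many permutations of $S$ restrict to $\tau$ on $T$, and then observing that this count does not depend on $\tau$. First I would make precise what ``the ordering of the elements of $T$ in a permutation of $S$'' means: given a permutation $\sigma$ of $S$, delete every element not in $T$; the surviving elements, read left to right, form a permutation of $T$, which I will call the ordering \emph{induced} by $\sigma$. The lemma then asserts that every one of the $|T|!$ orderings of $T$ is induced by the same number of the $|S|!$ permutations of $S$.

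The main step is the counting argument. Write $k = |T|$. To build a permutation of $S$ that induces a prescribed ordering $\tau$, I would first choose which $k$ of the $|S|$ positions will hold the elements of $T$, in $\binom{|S|}{k}$ ways; once these positions are fixed, the elements of $T$ must occupy them in exactly the order dictated by $\tau$, so there is no further freedom for $T$. I would then fill the remaining $|S| - k$ positions with the elements of $S \setminus T$ in any order, in $(|S|-k)!$ ways. Multiplying the independent choices gives $\binom{|S|}{k}(|S|-k)! = |S|!/k! = |S|!/|T|!$ permutations of $S$ inducing $\tau$. Since this count is independent of $\tau$, every ordering of $T$ arises from the same number of permutations of $S$, which proves the claim and, as a bonus, pins the common value down to $|S|!/|T|!$.

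There is no serious obstacle here; the only point needing care is the observation that fixing the $k$ positions determines the internal order of the $T$-entries completely (it must match $\tau$), so that the two stages of the count are genuinely independent and their counts simply multiply. An alternative proof avoids counting entirely: for two orderings $\tau$ and $\tau'$ of $T$, let $\rho$ be the bijection of $T$ carrying $\tau$ to $\tau'$, and map each permutation $\sigma$ of $S$ to the permutation obtained by replacing each $T$-entry $t$ of $\sigma$ with $\rho(t)$ while leaving its position and all non-$T$ entries unchanged. This is a bijection (its inverse uses $\rho^{-1}$) that sends permutations inducing $\tau$ to permutations inducing $\tau'$, so the two sets are equinumerous. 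I would present the counting argument as the primary proof, since it additionally yields the exact multiplicity that will be convenient when gathering terms in the Shapley and Owen computations.
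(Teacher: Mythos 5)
Your proposal is correct, but your primary argument takes a genuinely different route from the paper's. You count directly: for a fixed ordering $\tau$ of $T$, a permutation of $S$ inducing $\tau$ is built by choosing the $|T|$ positions that hold the elements of $T$, in ${{|S|}\choose{|T|}}$ ways (their internal order then being forced by $\tau$), and filling the remaining positions with the elements of $S - T$ in $(|S|-|T|)!$ ways, giving the $\tau$-independent count $|S|!/|T|!$. The paper instead gives a pure bijection argument: for each pair of orderings of $T$, it maps every permutation of $S$ realizing the first ordering to one realizing the second by removing the elements of $T$ and re-inserting them into the same set of positions in the order given by the second; equinumerosity follows with no computation at all. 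Your ``alternative proof'' is in essence exactly the paper's proof, since relabelling the $T$-entries in place via the bijection $\rho$ is the same operation as remove-and-re-insert. What your counting approach buys is the explicit common multiplicity $|S|!/|T|!$, which is genuinely useful context for the later term-gathering arguments; what the paper's bijection buys is brevity and freedom from arithmetic. Both are fully rigorous.
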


\begin{proof}
For each pair of permutations of $T$, there is a one-to-one mapping between permutations of $S$ with the elements of $T$ ordered according to the first permutation in the pair and those with the elements of $T$ ordered according to the second: remove the elements of $T$, then re-insert them into the same set of positions, but in the order given by the second permutation in the pair.
\end{proof}

\begin{lemma} \label{lemma_two}
Let $S_1, \ldots, S_m$ be $m$ disjoint sets, none of which contain element $i$. Let $t = |S_1 \cup \ldots \cup S_m \cup \si|$. Let $s_1, \ldots, s_m$ be integers with $0 \leq s_1 \leq |S_1|, \ldots, 0 \leq s_m \leq |S_m|$. Let $u = s_1 + \ldots + s_m$. Then the probability that a permutation drawn uniformly at random from the permutations of $S_1 \cup \ldots \cup S_m \cup \si$ has element $i$ preceded by exactly $s_1$ elements of $S_1$, \ldots, and $s_m$ elements of $S_m$ is
$$ \frac{1}{t} {{t-1}\choose{u}}^{-1} {{|S_1|}\choose{s_1}} \cdots {{|S_m|}\choose{s_m}}. $$
\end{lemma}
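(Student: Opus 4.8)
The plan is to reduce the probability to two simpler combinatorial facts: where $i$ lands in the permutation, and which elements precede it. I would present the argument as a conditional decomposition, since this isolates the two structures cleanly, and note that a direct enumeration yields the same answer.

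First I would observe that the stated event forces $i$ into position $u+1$. The sets $S_1, \ldots, S_m$ together with $\si$ are disjoint and exhaust the ground set of size $t$, so every predecessor of $i$ lies in exactly one $S_j$; hence the total number of predecessors equals $s_1 + \ldots + s_m = u$, placing $i$ in position $u+1$. Thus the event is equivalent to requiring that $i$ occupy position $u+1$ and that its set of $u$ predecessors contain exactly $s_j$ elements of each $S_j$.

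Next I would factor the probability as the probability that $i$ is in position $u+1$ times the conditional probability that the predecessor set has the prescribed composition. By the exchangeability of a uniform random permutation, $i$ is equally likely to occupy each of the $t$ positions, so the first factor is $\frac{1}{t}$. Conditioned on $i$ being in position $u+1$, its $u$ predecessors form a uniformly random $u$-subset of the remaining $t-1$ elements. The probability that this subset draws exactly $s_j$ elements from each block $S_j$ is the multivariate hypergeometric probability ${{|S_1|}\choose{s_1}} \cdots {{|S_m|}\choose{s_m}} / {{t-1}\choose{u}}$, obtained by counting the $u$-subsets of the correct composition against all $u$-subsets of the $t-1$ non-$i$ elements. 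Multiplying the two factors yields the claimed formula.

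The argument presents no real obstacle; the only points needing care are justifying that conditioning on the position of $i$ leaves its predecessor set uniform over $u$-subsets (which follows because the non-$i$ elements are exchangeable) and confirming the hypergeometric count. For readers preferring a purely enumerative proof, I would instead count permutations realizing the event directly: choose the predecessor subsets in $\prod_j {{|S_j|}\choose{s_j}}$ ways, order the $u$ predecessors in $u!$ ways, fix $i$ in position $u+1$, and order the $t-1-u$ successors in $(t-1-u)!$ ways; dividing the product by $t!$ and simplifying via $u!\,(t-1-u)!\,/\,t! = \frac{1}{t}{{t-1}\choose{u}}^{-1}$ recovers the same expression.
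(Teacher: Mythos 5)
Your proposal is correct and matches the paper's argument in substance: the paper's proof is exactly the direct enumeration you give at the end (choose the predecessor set, order it in $u!$ ways, fix $i$ in position $u+1$, order the remaining $t-1-u$ elements, divide by $t!$), and your primary conditional decomposition --- $\frac{1}{t}$ for the position of $i$ times the hypergeometric probability for the composition of its predecessors --- is just a probabilistic repackaging of that same counting. No gaps; nothing further needed.
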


\begin{proof}
For a specific set of $u$ elements, consisting of $s_1$ from $S_1$, \ldots, and $s_m$ from $S_m$, there are $u!$ ways to order them in the first $u$ positions of a permutation, a single ($1!$) way to place element $i$ in position $u+1$, and $(t-(u+1))!$ ways to order the remaining elements in the last $t-(u+1)$ positions. So the probability of that specific set of $u$ elements preceding $i$ is
$$ \frac{u! 1! (t-(u+1))!}{t!} = \frac{1}{t} \frac{u! (t-1-u)!}{(t-1)!} = \frac{1}{t} {{t-1}\choose{u}}^{-1}. $$
The number of specific sets of $u$ elements consisting of $s_1$ from $S_1$, \ldots, $s_m$ from $S_m$ is
$$  {{|S_1|}\choose{s_1}} \cdots {{|S_m|}\choose{s_m}}. $$
\end{proof}

\begin{lemma} \label{lemma_three}
For $T \subset S$ and any element in $S - T$, equal numbers of the permutations of $S$ have the element preceded by $0, 1, \ldots, |T|$ elements of $|T|$.
\end{lemma}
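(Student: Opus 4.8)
The plan is to reduce the count over permutations of the large set $S$ to a count over orderings of the much smaller set $T \cup \si$ (writing $i$ for the chosen element of $S - T$), and then to exploit the symmetry among the elements of $T$. The key observation is that the number of elements of $T$ preceding $i$ in a permutation of $S$ depends \emph{only} on the relative order of the $|T|+1$ elements of $T \cup \si$; the positions and internal order of the elements of $S - (T \cup \si)$ are irrelevant. This is exactly the kind of ``gathering terms'' that Lemma \ref{lemma_one} is built to support.

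First I would invoke Lemma \ref{lemma_one} with $T \cup \si$ playing the role of the subset: each of the $(|T|+1)!$ orderings of $T \cup \si$ occurs in an equal number of the permutations of $S$, say $N$ of them. This converts the problem from counting permutations of $S$ into counting orderings of $T \cup \si$, up to the common multiplicative factor $N$.

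Next I would count, for each $k \in \{0, 1, \ldots, |T|\}$, the orderings of $T \cup \si$ in which $i$ is preceded by exactly $k$ elements of $T$. Having exactly $k$ elements of $T$ before $i$ is the same as placing $i$ in position $k+1$ among the $|T|+1$ elements; the remaining $|T|$ positions are then filled by the elements of $T$ in any of $|T|!$ orders. Hence each value of $k$ is realized by exactly $|T|!$ orderings, a count that does not depend on $k$.

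Combining the two steps, the number of permutations of $S$ with $i$ preceded by exactly $k$ elements of $T$ is $N \cdot |T|!$ for every $k \in \{0, \ldots, |T|\}$, which is the claim. (Alternatively, one could specialize Lemma \ref{lemma_two} to $m = 1$ and $S_1 = T$, where the stated probability collapses to $\frac{1}{|T|+1}$ independent of $s_1$, giving the equal-probability form directly.) I do not expect a genuine obstacle here; the only point requiring care is the reduction in the first step --- recognizing that Lemma \ref{lemma_one} lets us discard the elements of $S$ outside $T \cup \si$ --- after which the conclusion is immediate from the symmetry among the elements of $T$.
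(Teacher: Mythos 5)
Your proof is correct and follows essentially the same route as the paper's: reduce to orderings of $T \cup \si$ via Lemma \ref{lemma_one}, then observe that each count $k$ of preceding elements of $T$ corresponds to placing the element in position $k+1$, realized by exactly $|T|!$ orderings. The parenthetical alternative via Lemma \ref{lemma_two} with $m=1$ is also valid but is not the paper's argument.
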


\begin{proof}
Let $e$ be any element in $S - T$. Among the permutations of $T \cup \se$, $e$ is in each position $1$ to $|T| + 1$ for an equal number of permutations, because with $e$ in a position, there are $|T|!$ ways to arrange the elements of $T$ in the remaining $|T|$ positions. But the position of $e$ in a permutation of $T \cup \se$ is one greater than the number of elements that precede $e$, so that number is each of $0$ to $|T|$ for an equal number of permutations of $T \cup \se$. It only remains to show that each permutation of $T \cup \se$ is the ordering of its elements in an equal number of permutations of $S$. But that is Lemma \ref{lemma_one}, with $T \cup \se$ as $T$. 
\end{proof}

The first lemma implies that if we select a permutation of $S$ at random, then the ordering of the elements of $T$ within the permutation is equally likely to be each permutation of $T$. The second lemma offers a way to compute the probability over permutations that a specific element is preceded by specified numbers of elements from different sets. The third lemma implies that we can build a random permutation of $S$ by starting with a single element, then, for each remaining element, select an insert position uniformly at random from: before the first entry, between the first and second, between the second and third, ..., after the last. 

A note on notation before we proceed to the main results: for combinations, let ${{a}\choose{b}}$ be zero if $b<0$ or $b>a$. This will avoid the need for complex bound-checking in summations that have combinations in their terms.

\section{Frequency-Based Decision Making} \label{freq_based}
Assume that there is a set of in-sample examples, each with an input and a label. Then, for each of a set or series of out-of-sample inputs, the in-sample examples are used to generate outputs, based on statistics or machine learning. Each output, which may be a classification, decision, or score, leads to some action, which has some value. The value of an action may be positive, zero, or negative. In general, the value of an action is not known when the action is taken. However, we assume that each action's value can be assessed later, when valuing contributions of data providers. We also assume that we can assess values for actions not taken.

By \textit{frequency-based decision-making}, we mean a process that, when supplied with an out-of-sample input, identifies which of a set of pre-defined bins contains it, then generates an output based solely on the frequencies of labels over in-sample examples in that bin. In other words, if two different sets of in-sample examples have the same frequencies over labels in a bin, then they generate the same outputs for all out-of-sample inputs in that bin.

Perhaps the simplest frequency-based decision-making process is to have a single, universal bin and label each out-of-sample input with the most common label among the in-sample examples. More complex bin schemes include histograms \cite{devroye96} (Section 6.4) and trees to divide the input space into multiple bins in different ways. More complex decision-making methods or scoring functions abound. They include producing no label unless the frequencies meet some standard of statistical significance, producing a lower or upper confidence bound on the probability of a label given the frequencies, producing a Bayesian estimate of the probability of a label given the frequencies and a prior, producing a smoothed output such as a sigmoid applied to the relative frequency of a label, or producing the output of an estimation procedure based on the relative frequency and the number of samples. (For details on estimation procedures, refer to \cite{wilson27,clopper34,agresti98,brown01,ross03}.) 

\subsection{Shapley Values for Examples}
Let $x$ be an out-of-sample input, and let $y$ be its label. (Note that $y$ is unknown at decision and action time, but becomes known before we allocate value to the in-sample examples.) We will derive Shapley values for each in-sample example in the same bin as $x$. The value for examples outside the bin is zero. To produce a Shapley value for an in-sample example used for multiple out-of-sample examples, sum the Shapley values over the out-of-sample examples. 

Let $n$ be the number of in-sample examples in the same bin as $x$. Let $1, \ldots, n$ index the in-sample examples in the same bin as $x$. For simplicity and clarity, focus on binary classification. (For more than two label values, the concepts are the same, but the notation is unwieldy.) For each example $i \in \{1, \ldots, n\}$, let $a(i) = 1$ and $b(i) = 0$ if the label of example $i$ equals $y$, and $a(i) = 0$ and $b(i) = 1$ otherwise. Let $A$ be the set of indices of in-sample examples in the same bin as $x$, excluding example $i$, that have label equal to $y$ (i.e. the same label as $x$). Let $B$ be the set of indices of examples in the same bin as $x$, excluding example $i$, that have a different label than $y$. 
 
Let $v(a, b)$ be the value of the action based on $x$, given the output that would result if the in-sample examples in the same bin as $x$ included exactly $a$ examples that have label $y$ and $b$ examples that do not. Let $\Delta_i v(a, b)$ be the change in value from adding example $i$ to the bin, given that $|A| = a$ and $|B| = b$:
 $$ \Delta_i v(a, b) = v(a + a(i), b + b(i)) - v(a, b). $$
 Let $R$ be the set of $(a,b)$ pairs for which adding example $i$ changes the value:
 $$ R = \{(a,b) |  \Delta_i v(a, b) \not= 0\}. $$
 Call $R$ the critical set. 
 
 \begin{theorem} \label{shapley_dec}
The Shapley value for example $i$ on out-of-sample input $x$ with label $y$ is:
 $$ \sum_{(a,b) \in R} \frac{1}{n} {{n - 1}\choose{a+b}}^{-1} {{|A|}\choose{a}} {{|B|}\choose{b}} \hbox{ } \Delta_i v(a, b). $$
 \end{theorem}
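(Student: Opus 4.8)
The plan is to start from the Shapley definition in equation~(\ref{shapley_def}) and exploit the frequency-based assumption to collapse the exponential sum over permutations into the polynomial sum over $(a,b)$ pairs claimed in the theorem. First I would reduce the problem to permutations of just the $n$ in-bin examples. Because the decision process is frequency-based, the value $v(S)$ of the action on $x$ depends only on the label counts of the in-bin examples in $S$; hence out-of-bin examples have zero marginal contribution, and by Lemma~\ref{lemma_one} (applied with $T$ the set of in-bin examples) the ordering of the in-bin examples inside a uniformly random permutation of all in-sample examples is itself uniform over permutations of those $n$ examples. This lets me replace the expectation in~(\ref{shapley_def}) with an expectation over permutations of the $n$ in-bin examples, whose indices $A$, $B$, and $\{i\}$ partition $\{1,\ldots,n\}$.

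Next I would group the permutations by the pair $(a,b)$, where $a$ is the number of elements of $A$ (same label as $y$) that precede $i$ and $b$ is the number of elements of $B$ (different label) that precede $i$. For any permutation in the group $(a,b)$, the predecessor set $S_i(\sigma)$ contains exactly $a$ examples with label $y$ and $b$ examples without, so by the frequency-based assumption $v(S_i(\sigma)) = v(a,b)$ and $v(S_i(\sigma)\cup\{i\}) = v(a+a(i), b+b(i))$. The marginal contribution is therefore exactly $\Delta_i v(a,b)$ and is constant across the group. Consequently the expectation becomes $\sum_{a,b} \Pr[(a,b)]\,\Delta_i v(a,b)$, where $\Pr[(a,b)]$ is the probability that a uniformly random permutation of the $n$ in-bin examples places exactly $a$ elements of $A$ and $b$ elements of $B$ before $i$.

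It remains to evaluate $\Pr[(a,b)]$ and to trim the sum. I would apply Lemma~\ref{lemma_two} with $m=2$, $S_1 = A$, $S_2 = B$, $s_1 = a$, $s_2 = b$, $u = a+b$, and $t = |A \cup B \cup \{i\}| = n$, which yields $\Pr[(a,b)] = \frac{1}{n}\binom{n-1}{a+b}^{-1}\binom{|A|}{a}\binom{|B|}{b}$, matching the weights in the theorem. Since $\Delta_i v(a,b) = 0$ for every $(a,b) \notin R$ by definition of the critical set, those terms vanish and the sum may be restricted to $(a,b) \in R$, giving the stated formula.

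The routine part is the combinatorial evaluation, which Lemma~\ref{lemma_two} hands us directly. The step that needs the most care is the first one: justifying that the marginal contribution depends only on the counts $(a,b)$ and that the reduction to in-bin permutations is valid. This rests entirely on the precise meaning of \emph{frequency-based}, namely that identical in-bin label frequencies produce identical outputs (hence identical action values) for all inputs in the bin, together with the observation that inserting $i$ changes exactly one of the two counts. Once that is nailed down, the grouping and Lemma~\ref{lemma_two} finish the argument.
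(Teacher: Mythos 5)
Your proposal is correct and follows essentially the same route as the paper's proof: reduce to permutations of the $n$ in-bin examples via Lemma~\ref{lemma_one}, compute the probability of $a$ elements of $A$ and $b$ elements of $B$ preceding $i$ via Lemma~\ref{lemma_two}, identify the marginal contribution as $\Delta_i v(a,b)$, and drop the zero terms outside $R$. Your write-up is somewhat more explicit than the paper's (notably in justifying that the marginal value depends only on the counts $(a,b)$ under the frequency-based assumption), but the decomposition and the key lemmas are identical.
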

 
 \begin{proof}
Begin with Expression \ref{shapley_def}, the general equation for a Shapley value from Section \ref{review_shapley}. The values $v(S_i(\sigma))$ and $v(S_i(\sigma) \cup \si)$ are only affected by in-sample examples in the same bin as $x$. By Lemma \ref{lemma_one}, the same number of permutations of all examples contain each ordering of those $n$ examples.  So average over the $n!$ permutations of those examples instead of the permutations of all in-sample examples. By Lemma \ref{lemma_two} the probability that $a$ indices from $A$ and $b$ from $B$ come before $i$ is
$$ \frac{1}{n} {{n - 1}\choose{a+b}}^{-1} {{|A|}\choose{a}} {{|B|}\choose{b}}. $$
Given that this occurs, 
$$ \Delta_i v(a, b) = v(S_i(\sigma) \cup \si) - v(S_i(\sigma)). $$
Restricting the sum to terms with $(a,b) \in R$ only avoids terms with $\Delta_i v(a, b) = 0$, so it has no impact on the sum.
 \end{proof}
 
As a simple example, suppose we use a single bin that includes all in-sample examples, a classification rule that outputs the majority label and has no output if there is no majority, and a value function that is 100 if the classification is correct, -500 if incorrect, and zero if there is no output. Suppose example $i$ has label $y$. If $a = b$, then adding example $i$ to a bin with $a$ examples with label $y$ and $b$ without turns no output into a correct classification. Similarly, if $a = b - 1$, then adding example $i$ changes from incorrect classification to no output. So $R = \{(a,b) | a = b\} \cup \{(a,b) | a = b - 1\}$, and the Shapley value for example $i$ is:
$$ \frac{100}{n} \sum_{\{(a,b) | a = b\}} {{n - 1}\choose{a+b}}^{-1} {{|A|}\choose{a}} {{|B|}\choose{b}} + \frac{500}{n} \sum_{\{(a,b) | a = b - 1\}} {{n - 1}\choose{a+b}}^{-1} {{|A|}\choose{a}} {{|B|}\choose{b}}. $$
Alternatively, if example $i$ does not have label $y$, then adding it creates a tie from a correct classification if $a = b + 1$, for a loss of 100, and creates an incorrect classification from a tie if $a = b$, for a loss of 500. So the Shapley value is:
$$ -\frac{100}{n} \sum_{\{(a,b) | a = b + 1\}} {{n - 1}\choose{a+b}}^{-1} {{|A|}\choose{a}} {{|B|}\choose{b}} - \frac{500}{n} \sum_{\{(a,b) | a = b\}} {{n - 1}\choose{a+b}}^{-1} {{|A|}\choose{a}} {{|B|}\choose{b}}. $$
%For more complex decision rules, such as no output unless one label is statistically significantly more prevalent than the other, there are still 

The Shapley values are the same for all in-sample examples in the same bin and with the same label. Also, the Shapley values for each in-sample example are the same for all out-of-sample examples in the same bin and with the same label. So computing the Shapley values for all in-sample examples over a set of out-of-sample examples requires only computing the Shapley values for any in-sample example in each bin with each label for an out-of-sample example in each bin with each label -- four different Shapley value computations per bin, one for each combination of in-sample label and out-of-sample label.

\subsection{Owen Values for Sets of Examples}
Suppose that a set of coalitions (for example, data-gathering organizations) each contribute subsets of the examples for the in-sample set. Let $m$ be the number of coalitions, and let $C_1, \ldots, C_m$ be the sets of indices of the (within-bin) in-sample examples contributed by each coalition. We will compute Owen values for each in-sample example, for the value of a decision and action on an out-of-sample example $x$ with label $y$. (The label is unknown at decision time, but we assume that it is known at the time that data values are computed.) The Owen value for an in-sample example over a set of out-of-sample examples is the sum of Owen values for each out-of-sample example. The Owen value for a coalition is the sum of Owen values over the in-sample examples that the coalition contributes. 

Owen values are nested expectations, over permutations of coalitions and permutations of in-sample examples within each coalition. We will use dynamic programming to account for permutations over coalitions, then use the results from the previous section for permutations over examples within each coalition. To make the notation simpler, assume that we are computing the Owen value for an example in coalition $m$, the last coalition. (For the others, we can simply reorder the indices of the coalitions to make the coalition of interest become coalition $m$.) 

Let $p_{j,s,a,b}$ be the probability that a random permutation of the coalitions places $s$ of the coalitions $C_1, \ldots, C_j$ before $C_m$ and the others after, and those $s$ coalitions collectively contribute $a$ in-samples examples with label $y$ and $b$ examples with the other label. Then, by definition, base case values are:
$$ p_{0,0,0,0} = 1 $$
and 
$$ \forall (s, a, b) \not= (0,0,0): p_{0,s,a,b} = 0. $$
And the following recurrence holds:

\begin{lemma} \label{lemma_rec}
Let $a_j$ be the number of in-sample examples indexed by $C_j$ that have label $y$ and let $b_j$ be the number that do not. Then
$$ p_{j,s,a,b} = \frac{s+1}{j+1} p_{j-1,s-1,a-a_j,b-b_j} + \frac{j-s}{j+1} p_{j-1,s,a,b}. $$
\end{lemma}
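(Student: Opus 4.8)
The plan is to prove the recurrence by induction on $j$, viewing it as a one-step conditioning on where coalition $C_j$ falls relative to $C_m$. The base cases $p_{0,0,0,0}=1$ and $p_{0,s,a,b}=0$ for $(s,a,b)\neq(0,0,0)$ are already fixed, so only the inductive step for $j\geq 1$ remains. First I would reduce the event to the relative order of the $j+1$ coalitions $C_1,\ldots,C_j,C_m$: since $s$, $a$, and $b$ depend only on this relative order, Lemma \ref{lemma_one} (applied with the $m$ coalitions as the set $S$ and these $j+1$ coalitions as the subset $T$) shows the relative order is uniformly distributed, so $p_{j,s,a,b}$ equals the same probability computed under a uniform random permutation of $\{C_1,\ldots,C_j,C_m\}$ alone.

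Next I would generate such a uniform permutation by the sequential-insertion construction justified by Lemma \ref{lemma_three} and the remark following it: lay down a uniform permutation of the $j$ coalitions $\{C_1,\ldots,C_{j-1},C_m\}$, then insert $C_j$ into one of the $j+1$ gaps uniformly at random and independently of that permutation. By the inductive hypothesis the sub-permutation of $\{C_1,\ldots,C_{j-1},C_m\}$ realizes the distribution $p_{j-1,\cdot}$. Conditioning on the pre-insertion state $(s',a',b')$, inserting $C_j$ before $C_m$ sends the state to $(s'+1,\,a'+a_j,\,b'+b_j)$, while inserting it after leaves the state at $(s',a',b')$. Hence the target event $(s,a,b)$ at level $j$ splits into exactly two contributions: a before-case with prior state $(s-1,\,a-a_j,\,b-b_j)$, and an after-case with prior state $(s,a,b)$, which is precisely the two-term structure of the recurrence.

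The crux, and the only step where I expect real difficulty, is pinning down the two insertion probabilities exactly, since an off-by-one is easy here. Conditioned on $s'$ coalitions preceding $C_m$, the coalition $C_m$ occupies position $s'+1$ among the $j$ already-placed coalitions, so exactly $s'+1$ of the $j+1$ gaps place $C_j$ before $C_m$ and the remaining $j-s'$ place it after; thus the insert-before and insert-after probabilities are $\frac{s'+1}{j+1}$ and $\frac{j-s'}{j+1}$. Substituting the before-case prior $s'=s-1$ yields an insertion weight of $\frac{s}{j+1}$, and the after-case prior $s'=s$ yields $\frac{j-s}{j+1}$; multiplying each by the corresponding prior probability and summing would complete the induction. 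I would reconcile this count carefully with the stated coefficient, because my enumeration produces $\frac{s}{j+1}$ (rather than $\frac{s+1}{j+1}$) multiplying $p_{j-1,s-1,a-a_j,b-b_j}$ — a reading I can check directly on the $m=2$ case, where $p_{1,1,a_1,b_1}=\frac{1}{2}$ forces that weight to equal $\frac{1}{2}=\frac{s}{j+1}$, and where $\frac{s+1}{j+1}=1$ would instead be inconsistent.
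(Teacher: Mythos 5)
Your derivation is correct, and it takes essentially the same route as the paper's own proof: reduce to the relative order of $C_1,\ldots,C_j,C_m$ (Lemma \ref{lemma_one}), then condition on which gap $C_j$ occupies relative to $C_m$ (Lemma \ref{lemma_three}). The discrepancy you flag at the end is not a misreading on your part: the coefficient printed in the lemma, and carried through the paper's proof, contains a genuine off-by-one error, and your version
$$ p_{j,s,a,b} = \frac{s}{j+1}\, p_{j-1,s-1,a-a_j,b-b_j} + \frac{j-s}{j+1}\, p_{j-1,s,a,b} $$
is the correct recurrence. The paper's proof correctly computes that, conditioned on $s$ of $C_1,\ldots,C_{j-1}$ preceding $C_m$, the probability that $C_j$ also precedes $C_m$ is $\frac{s+1}{j+1}$; but that conditional probability belongs with the prior state $(s,a,b)$, and in that case the resulting level-$j$ state is $(s+1,a+a_j,b+b_j)$, not $(s,a,b)$. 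The factor multiplying $p_{j-1,s-1,a-a_j,b-b_j}$ must instead be conditioned on the event whose probability that term represents, namely that $s-1$ of $C_1,\ldots,C_{j-1}$ precede $C_m$; the same gap count then gives $\frac{(s-1)+1}{j+1}=\frac{s}{j+1}$, exactly as you computed. In short, the paper pairs a correct conditional probability with the wrong prior state.

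Your $m=2$ check settles the matter ($p_{1,1,a_1,b_1}=\frac12$ directly from the definition, while the printed recurrence yields $1$), and there is an even quicker global confirmation: summing the printed recurrence over all $(s,a,b)$ multiplies total probability mass by $\frac{j+2}{j+1}$ at every step, so it cannot define a probability distribution, whereas your coefficients preserve total mass $1$. Note also that the same off-by-one propagates to Lemma \ref{lemma_rec_knn}, whose proof is given by substitution into this one, and it would numerically (though not structurally) corrupt the downstream Owen-value formulas that consume $p_{a,b}$ and $q_{a,b}$.
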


\begin{proof}
The first term accounts for $C_j$ preceding $C_m$ in the permutation of coalitions, and the second term accounts for $C_j$ following $C_m$. From Lemma \ref{lemma_three}, among all permutations of $C_1, \ldots, C_m$, $C_j$ is equally likely to be preceded by each number from $0$ to $j+1$ of the coalitions $C_1, \ldots C_{j-1}$ and $C_m$. If $s$ of $C_1, \ldots, C_{j-1}$ come before $C_m$, then $C_j$ comes before $C_m$ too if $s$ or fewer of $C_1, \ldots, C_{j-1}$ and $C_m$ precede $C_j$. From $0$ to $s$ is $s+1$ possibilities, so the probability that $C_j$ comes before $C_m$ is $\frac{s+1}{j+1}$. The probability that it comes after is $1 - \frac{s+1}{j+1} = \frac{j-s}{j+1}$. 
\end{proof}

\begin{theorem}
Let 
$$ p_{a,b} = \sum_{s=0}^{m-1} p_{m-1, s, a, b}. $$
Then the Owen value for in-sample example $i$ contributed by coalition $C_m$ on out-of-sample input $x$ with label $y$ is:
$$ \sum_{(a,b) \in R} \sum_{(a',b') | a' \leq a, b' \leq b} p_{a - a', b - b'} \hbox{ } \frac{1}{|C_m|} {{|C_m| - 1}\choose{|A_m| + |B_m|}}^{-1} {{|A_m|}\choose{a'}} {{|B_m|}\choose{b'}} \hbox{ } \Delta_i(a,b), $$
where $A_m$ is the set of examples indexed by $C_m$, excluding example $i$, that have label $y$, and $B_m$ is the set that do not.
\end{theorem}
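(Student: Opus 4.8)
The plan is to start from the Owen-value definition (Expression \ref{owen_def}) specialized to $h = m$, and to collapse the two nested expectations---the outer one over coalition permutations $\sigma_C \in P_C$ and the inner one over permutations $\sigma_m \in P_m$ of the examples within $C_m$---into a single sum over the combined count of preceding examples. As in the proof of Theorem \ref{shapley_dec}, I would first observe that $v$ depends only on the number of within-bin in-sample examples with label $y$ and the number with the other label that precede $i$ in the combined ordering $\cup_{j \in S_m(\sigma_C)} C_j \cup S_i(\sigma_m)$; everything outside the bin is irrelevant. The marginal contribution of $i$ is therefore $\Delta_i v(a,b)$, where $a$ and $b$ are the totals of $y$-labeled and non-$y$-labeled preceding examples, and these totals split additively into a part coming from coalitions that precede $C_m$ and a part coming from members of $C_m$ that precede $i$.

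The key structural point is that these two contributions are independent: the outer permutation $\sigma_C$ determines which whole coalitions land before $C_m$, and every example of such a coalition precedes every example of $C_m$ (including $i$), so their label counts are fixed by $\sigma_C$ alone; meanwhile $\sigma_m$ independently determines how many examples of $C_m$ precede $i$. I would therefore compute the joint probability as a product. For the outer factor, I would first verify by induction, using the recurrence of Lemma \ref{lemma_rec} together with the base cases $p_{0,0,0,0}=1$ and $p_{0,s,a,b}=0$ otherwise, that $p_{m-1,s,a,b}$ is exactly the probability over $\sigma_C$ that $s$ of the coalitions $C_1, \ldots, C_{m-1}$ precede $C_m$ and collectively contribute $a$ examples with label $y$ and $b$ with the other label. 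Marginalizing over $s$ then yields $p_{a,b}$, the probability that the preceding coalitions contribute the pair $(a,b)$ regardless of their number.

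For the inner factor, I would apply Lemma \ref{lemma_two} with the two disjoint sets $A_m$ and $B_m$ (and element $i$), taking $t = |C_m|$ and $u = a' + b'$, to obtain that the probability that exactly $a'$ examples of $A_m$ and $b'$ of $B_m$ precede $i$ within $\sigma_m$ is $\frac{1}{|C_m|} {{|C_m|-1}\choose{a'+b'}}^{-1} {{|A_m|}\choose{a'}} {{|B_m|}\choose{b'}}$. Writing the combined totals as $a = (a-a') + a'$ and $b = (b-b') + b'$, the probability of observing totals $(a,b)$ is the convolution $\sum_{(a',b') : a' \le a,\, b' \le b} p_{a-a',b-b'} \cdot \frac{1}{|C_m|} {{|C_m|-1}\choose{a'+b'}}^{-1} {{|A_m|}\choose{a'}} {{|B_m|}\choose{b'}}$. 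Multiplying by $\Delta_i v(a,b)$ and summing over all $(a,b)$ recovers the Owen value; restricting the outer sum to $(a,b) \in R$ discards only terms with $\Delta_i v(a,b) = 0$, exactly as in Theorem \ref{shapley_dec}.

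The step I expect to require the most care is justifying the independence-and-convolution decomposition rigorously: I must confirm that, conditioned on which coalitions precede $C_m$, the ordering of $i$ relative to $A_m$ and $B_m$ within $\sigma_m$ is distributed exactly as Lemma \ref{lemma_two} prescribes and is statistically independent of $\sigma_C$, so that the joint distribution genuinely factors and the sum over $(a',b')$ is a clean convolution rather than merely an upper bound. The inductive verification that $p_{m-1,s,a,b}$ matches the claimed coalition-permutation probability is routine but must be stated to license the marginalization defining $p_{a,b}$.
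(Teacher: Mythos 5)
Your proposal is correct and takes essentially the same approach as the paper's proof: reduce the inner average to permutations of $C_m$'s within-bin elements (Lemma \ref{lemma_one}), use Lemma \ref{lemma_rec} and marginalization over $s$ for the coalition-permutation probability $p_{a-a',b-b'}$, apply Lemma \ref{lemma_two} for the within-$C_m$ probability, multiply the two by independence, convolve over $(a',b')$, and restrict to $R$ since omitted terms have $\Delta_i v(a,b) = 0$. One remark: your inner factor ${{|C_m|-1}\choose{a'+b'}}^{-1}$ is the correct application of Lemma \ref{lemma_two} with $t = |C_m|$ and $u = a'+b'$, whereas the theorem statement (and the paper's proof) write ${{|C_m|-1}\choose{|A_m|+|B_m|}}^{-1}$, which is identically one because $|A_m|+|B_m| = |C_m|-1$; that appears to be a typo in the paper, and your version is the right one.
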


\begin{proof}
Recall the definition of the Owen value from Expression \ref{owen_def} in Section \ref{review_shapley} (with $h = m$ for us). By Lemma \ref{lemma_rec}, $p_{a - a', b - b'}$ is the probability over permutations of coalitions that the coalitions preceding $C_m$ in a permutation contribute $a - a'$ within-bin in-sample examples with label $y$ and $b - b'$ without. By Lemma \ref{lemma_one} we can take the inner average over only permutations of the elements of $C_m$, ignoring any out-of-bin examples. By Lemma \ref{lemma_two},  the probability that the indices of $a'$ of those examples with label $y$, and $b'$ without, precede $i$ is
$$ \frac{1}{|C_m|} {{|C_m| - 1}\choose{|A_m| + |B_m|}}^{-1} {{|A_m|}\choose{a'}} {{|B_m|}\choose{b'}}. $$
So the product is the probability that the preceding coalitions and examples from $C_m$ preceding $i$ together contribute $a$ examples with label $y$ and $b$ without. For expectation, we need probability times value, and $\Delta_i(a,b)$ is the marginal value from adding example $i$. Finally, the summation over $R$ only leaves out zero-value terms. 
\end{proof}

Within each coalition, in-sample examples in the same bin and with the same label have the same Owen value. Also, for each in-sample example, the Owen values are the same for all out-of-sample examples in the same bin and with the same label. So computing Owen values for all in-sample examples over a set of out-of-sample examples requires only computing the Owen value for each combination of coalition, bin, in-sample label, and out-of-sample label.

\section{Nearest-Neighbor Classifiers} \label{knn}
To classify each out-of-sample input $x$, a $k$-nearest neighbor ($k$-nn) classifier \cite{cover67,cover68,duda01,devroye96} first identifies the $k$ nearest neighbors, which are the in-sample examples with inputs closest to $x$ according to some metric. Then the classifier outputs the label shared by the majority of the $k$ nearest neighbors. We assume $k$ is odd and binary classification, so no ties in voting. Also, we assume the metric has consistent tie-breaking, in order to have consistent neighbors for the same $x$. To make a metric meet this condition with probability one, augment each input with a real number, and in case of a distance tie, use the distance between those real numbers to settle the tie \cite{devroye79}. Note that the metric can be any function that takes two example inputs and returns a number -- the metric need not follow the triangle inequality and need not be symmetric. 

\subsection{Shapley Values for $k$-Nearest Neighbors}
Let $n$ be the number of in-sample examples. Refer to in-sample examples by indices 1 to $n$. To compute the Shapley value for example $i$ in the classification of out-of-sample input $x$ with actual label $y$, note that adding example $i$ to a set of in-sample examples indexed by $S$ can affect the value in two different ways: (1) if $S$ has $k-1$ examples, then adding example $i$ makes a vote among $k$ examples possible, and (2) if $S$ has $k$ or more examples, then adding example $i$ may displace one of the $k$ nearest neighbors from $S$, which may change the vote. We will handle these cases separately.  

For case (1), assume that the $k$-nn classifier makes no decision if it has fewer than $k$ in-sample examples. Let $v_n$ be the value if there is no decision. Let $v_c$ be the value of a correct classification, and let $v_w$ be the value of an incorrect one. If $|S| = k-1$, then $v(S) = v_n$. Also, $v(S \cup \si) = v_c$ if the majority of the labels on examples indexed by $S \cup \si$ are $y$, and $v(S \cup \si) = v_w$ otherwise. 

\begin{lemma} \label{create_shapley}
Let $y_i$ be the label of example $i$, let indicator function $I()$ be one if its argument is true and zero otherwise, and let $A$ index the in-sample examples, excluding example $i$, that have label $y$. Then the contribution to the Shapley value for example $i$ creating a classification of out-of-sample input $x$ with label $y$ is:
$$ f_i(x,y) = \frac{1}{n} {{n-1}\choose{k-1}}^{-1} \left[ \sum_{a=0}^{\frac{k-1}{2} - I(y_i=y)} {{|A|}\choose{a}} {{n - 1 - |A|}\choose{k - 1 - a}} v_w + \sum_{a = \frac{k-1}{2} - I(y_i=y) + 1}^{k-1} {{|A|}\choose{a}} {{n - 1 - |A|}\choose{k - 1 - a}} v_c \right] - \frac{v_n}{n}. $$
\end{lemma}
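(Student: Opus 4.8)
The plan is to begin from the general Shapley formula in Expression \ref{shapley_def} and isolate exactly the contribution of case~(1): the permutations $\sigma$ in which precisely $k-1$ of the other examples precede $i$, so that adding $i$ brings the preceding set up to size $k$ and first triggers a classification. In every such permutation $|S_i(\sigma)| = k-1$, hence $v(S_i(\sigma)) = v_n$; all permutations with $|S_i(\sigma)| \neq k-1$ belong to case~(2) and contribute nothing to $f_i(x,y)$.

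For the probability weights I would invoke Lemma \ref{lemma_two} with the two disjoint sets $A$ (the label-$y$ examples other than $i$) and its complement among the remaining $n-1$ examples (of size $n-1-|A|$). Taking $t = n$ and $u = k-1$, the probability that $i$ is preceded by exactly $a$ elements of $A$ and $k-1-a$ of the complement is $\frac{1}{n}{{n-1}\choose{k-1}}^{-1}{{|A|}\choose{a}}{{n-1-|A|}\choose{k-1-a}}$, which is precisely the weight multiplying each term in the claimed expression. Note that Lemma \ref{lemma_two} already builds in the factor $\frac{1}{n}$ coming from $i$ landing in position $k$, so no separate symmetry argument is needed.

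Next I would evaluate the marginal value on these permutations. The augmented set $S_i(\sigma) \cup \si$ has $k$ examples, of which $a + I(y_i=y)$ carry label $y$. Since $k$ is odd, the majority label is $y$ -- a correct classification worth $v_c$ -- exactly when $a + I(y_i=y) \geq \frac{k+1}{2}$, i.e. $a \geq \frac{k-1}{2} - I(y_i=y) + 1$, and is the wrong label (worth $v_w$) otherwise; this is the split that produces the two summation ranges. Multiplying each weight by the marginal contribution $v_c - v_n$ or $v_w - v_n$ and summing gives the two sums with $v_c$ and $v_w$, together with a collected term $-\frac{v_n}{n}{{n-1}\choose{k-1}}^{-1}\sum_{a=0}^{k-1}{{|A|}\choose{a}}{{n-1-|A|}\choose{k-1-a}}$. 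By Vandermonde's identity the inner sum equals ${{n-1}\choose{k-1}}$, so this term collapses to $-\frac{v_n}{n}$, matching the statement.

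The main obstacle is not any deep difficulty but pinning down the majority threshold and its interaction with $I(y_i=y)$ exactly: one must verify that for odd $k$ the boundary value $a = \frac{k-1}{2} - I(y_i=y)$ lands on the $v_w$ side and $a = \frac{k-1}{2} - I(y_i=y) + 1$ on the $v_c$ side, for both values of $I(y_i=y)$. Once that is fixed, the weights fall straight out of Lemma \ref{lemma_two} and the $v_n$ collection is a one-line Vandermonde computation.
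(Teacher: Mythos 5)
Your proof is correct and follows essentially the same route as the paper's: restrict to permutations where example $i$ lands in position $k$, weight each composition of the preceding $k-1$ examples hypergeometrically, and split on the majority threshold at $a = \frac{k-1}{2} - I(y_i=y)$. The only cosmetic differences are that you obtain the weights by invoking Lemma \ref{lemma_two} (with $t=n$, $u=k-1$) where the paper argues the position-$k$ symmetry directly, and you make explicit the Vandermonde collection of the $-v_n$ terms into $-\frac{v_n}{n}$, which the paper leaves implicit.
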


\begin{proof}
The fraction of permutations with $i$ in position $k$ is $\frac{1}{n}$, by symmetry. For those permutations, each set of $k-1$ of the $n-1$ other examples is equally common as the first $k-1$ examples in the permutation. Sets of $k-1$ examples with $\frac{k-1}{2} - I(y_i=y)$ or fewer examples with label $y$ have a majority of labels not equal to $y$ once example $i$ is included to make $k$ votes, so $v(S \cup \si) = v_w$. Those with more examples with label $y$ result in correct classification, so $v(S \cup \si) = v_c$. In both cases, we must subtract $v(S) = v_n$.
\end{proof}
 
For case (2), let example $j$ be the $k$th nearest neighbor to $x$ in $S$. (We will say an example is in $S$ to denote that its index is in $S$.)  If example $i$ is closer to $x$ than example $j$ is, then example $i$ displaces example $j$ as a voter, which may alter the classifier's decision. Adding $i$ to $S$ changes the classification if and only if all these conditions hold:
\begin{itemize}
\item Example $i$ is nearer to $x$ than example $j$ is, so that example $i$ replaces example $j$ as a voter.
\item Example $i$ has a different label than example $j$, so they vote differently. 
\item Exactly half the $k-1$ nearest neighbors to $x$ in $S$ have label $y$, so example $i$ changes the majority vote. 
\end{itemize}

If adding example $i$ meets these conditions and example $i$ has label $y$, then $v(S \cup \si) - v(S) = v_c - v_w$, because adding $i$ corrects an incorrect classification. If it meets the conditions and does not have label $y$, then $v(S \cup \si) - v(S) = v_w - v_c$. 

\begin{lemma} \label{change_shapley}
Let $J$ index the in-sample examples that have a different label than example $i$ and a greater distance to $x$. If example $i$ has label $y$, then let $\Delta_i v = v_c - v_w$; otherwise let $\Delta_i v = v_w - v_c$. Let $A_j$ index the in-sample examples, excluding example $i$, that are closer to $x$ than example $j$ and have label $y$. Let $B_j$ index those that do not have label $y$. Then the Shapley value for example $i$ changing the classification of out-of-sample input $x$ with label $y$ is:
$$ g_i(x,y) = \sum_{j \in J} \frac{1}{|A_j| + |B_j| + 2} {{|A_j| + |B_j| + 1}\choose{k}}^{-1} {{|A_j|}\choose{\frac{k-1}{2}}} {{|B_j|}\choose{\frac{k-1}{2}}} \Delta_i v. $$
\end{lemma}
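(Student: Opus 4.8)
The plan is to start from the general Shapley expression (Expression \ref{shapley_def}) and keep only the case-(2) marginal contributions, i.e. those permutations $\sigma$ for which $S = S_i(\sigma)$ already contains at least $k$ examples, so that adding $i$ can change the value only by displacing the current $k$th nearest neighbor of $x$ in $S$. The organizing observation is that for any such $\sigma$ this $k$th nearest neighbor is a single, uniquely determined example $j$, so the permutations partition into disjoint events indexed by the identity of $j$. I would therefore write the case-(2) contribution as a sum over $j$ of the probability of its event times the corresponding marginal value, and note that summing over $j$ (rather than a more complicated inclusion-exclusion argument) is legitimate precisely because of this uniqueness.

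Next I would identify which $j$ can contribute and with what value. By the three bulleted conditions, a nonzero term needs example $i$ nearer to $x$ than $j$ and carrying the opposite label; these are exactly the defining properties of the index set $J$, so the sum ranges over $j \in J$, and by the discussion preceding the lemma every contributing term carries the same marginal value $\Delta_i v$ (equal to $v_c - v_w$ or $v_w - v_c$ according to the label of $i$). It then remains only to compute, for a fixed $j \in J$, the probability of the event that $j$ is the displaced $k$th nearest neighbor and that the vote flips.

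The central step is to translate that event into a statement about relative order. For $j$ to be the $k$th nearest neighbor in $S_i(\sigma)$, exactly $k-1$ examples nearer to $x$ than $j$ must precede $i$, and $j$ itself must precede $i$; the third bullet, an even split among the $k-1$ nearest neighbors, forces exactly $\frac{k-1}{2}$ of those nearer examples to carry label $y$. Since the examples nearer to $x$ than $j$, other than $i$, are exactly $A_j \cup B_j$, the event becomes: among the $t = |A_j| + |B_j| + 2$ elements of $A_j \cup B_j \cup \{j\} \cup \si$, example $i$ is preceded by exactly $\frac{k-1}{2}$ elements of $A_j$, exactly $\frac{k-1}{2}$ of $B_j$, and by $j$. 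Examples farther from $x$ than $j$ play no role and can be ignored.

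Finally I would invoke the permutation lemmas. By Lemma \ref{lemma_one} the relative order of those $t$ elements inside a uniformly random permutation of all $n$ in-sample examples is itself uniform, so the desired probability equals the corresponding probability for permutations of the $t$ elements alone. Lemma \ref{lemma_two}, applied with $S_1 = A_j$, $S_2 = B_j$, $S_3 = \{j\}$ and $s_1 = s_2 = \frac{k-1}{2}$, $s_3 = 1$ (so that $u = k$ and $t = |A_j| + |B_j| + 2$), then yields exactly the coefficient $\frac{1}{|A_j| + |B_j| + 2}{{|A_j| + |B_j| + 1}\choose{k}}^{-1}{{|A_j|}\choose{\frac{k-1}{2}}}{{|B_j|}\choose{\frac{k-1}{2}}}$, using ${{1}\choose{1}} = 1$. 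Multiplying by $\Delta_i v$ and summing over $j \in J$ gives the claimed formula. I expect the main obstacle to be the translation step of the third paragraph: verifying that ``$j$ is the $k$th nearest neighbor in $S_i(\sigma)$'' together with the even-split condition collapses cleanly into the preceded-by counts $\left(\frac{k-1}{2}, \frac{k-1}{2}, 1\right)$, and confirming both that farther examples are irrelevant and that the per-$j$ events are genuinely disjoint, so that a plain sum over $j$ is correct.
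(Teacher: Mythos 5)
Your proposal is correct and follows essentially the same route as the paper's proof: sum over $j \in J$ to enforce the first two displacement conditions, restrict attention via Lemma \ref{lemma_one} to permutations of $A_j \cup B_j \cup \{j\} \cup \si$, and apply Lemma \ref{lemma_two} with $S_1 = A_j$, $S_2 = B_j$, $S_3 = \{j\}$ and counts $\left(\frac{k-1}{2}, \frac{k-1}{2}, 1\right)$ to obtain the coefficient. Your explicit justification that the per-$j$ events are disjoint (uniqueness of the displaced $k$th nearest neighbor) is a point the paper leaves implicit, but it is the same argument.
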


\begin{proof}
Summation over $J$ is needed to ensure the first two conditions for example $i$ to change the outcome of the vote. The marginal value $\Delta_i v$ is the correct value for changing the vote. By Lemma \ref{lemma_one}, for each $j$, we can average over permutations of just $A_j \cup B_j \cup \{j\} \cup \{i\}$ rather than all in-sample examples. For the probability that a permutation has exactly $\frac{k-1}{2}$ of the indices in $A_j$, exactly $\frac{k-1}{2}$ of the indices in $B_j$, and $j$ all preceding $i$, apply Lemma \ref{lemma_two}, with $S_1 = A_j$, $S_2 = B_j$, and $S_3 = \{j\}$: 
$$ \frac{1}{|A_j| + |B_j| + 2} {{|A_j| + |B_j| + 1}\choose{k}}^{-1} {{|A_j|}\choose{\frac{k-1}{2}}} {{|B_j|}\choose{\frac{k-1}{2}}} {{1}\choose{1}}. $$
$$ = \frac{1}{|A_j| + |B_j| + 2} {{|A_j| + |B_j| + 1}\choose{k}}^{-1} {{|A_j|}\choose{\frac{k-1}{2}}} {{|B_j|}\choose{\frac{k-1}{2}}}. $$
\end{proof}

\begin{theorem} \label{knn_shapley}
The Shapley value for example $i$ in the classification of out-of-sample input $x$ with label $y$ is:
$$ f_i(x,y) + g_i(x,y). $$
\end{theorem}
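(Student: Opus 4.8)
The plan is to return to the Shapley definition (Expression \ref{shapley_def}) and exploit linearity of the expectation over permutations by partitioning the permutations of the in-sample examples according to $|S_i(\sigma)|$, the number of examples that precede example $i$. Every permutation $\sigma$ falls into exactly one of three regimes, so the Shapley value is the sum of the expected marginal contributions restricted to each regime.

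First I would dispose of the regime $|S_i(\sigma)| < k-1$. Here both $S_i(\sigma)$ and $S_i(\sigma) \cup \si$ contain fewer than $k$ examples, so the classifier abstains both before and after adding example $i$. The marginal contribution is $v(S_i(\sigma) \cup \si) - v(S_i(\sigma)) = v_n - v_n = 0$, and this regime contributes nothing to the average. Next, the regime $|S_i(\sigma)| = k-1$ is exactly the ``creating'' case: adding $i$ turns an abstention into a $k$-vote classification. The expected marginal contribution over these permutations is precisely $f_i(x,y)$ by Lemma \ref{create_shapley}. Finally, the regime $|S_i(\sigma)| \geq k$ is the ``changing'' case, in which a decision already exists and adding $i$ can alter the value only by displacing the incumbent $k$th nearest neighbor; Lemma \ref{change_shapley} shows the expected contribution over these permutations equals $g_i(x,y)$. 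Summing the three regimes yields $f_i(x,y) + g_i(x,y)$.

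Because the two lemmas have already carried out the permutation-counting inside each regime, the theorem is essentially an additive combination, and the main thing to verify carefully is that these three regimes genuinely tile the permutation space with no overlap and no omission. Disjointness is immediate, since the conditions $|S_i(\sigma)| < k-1$, $|S_i(\sigma)| = k-1$, and $|S_i(\sigma)| \geq k$ are mutually exclusive and exhaustive over the possible values $0, \ldots, n-1$. The subtler point, and the one I expect to be the main obstacle, is completeness within the changing regime: for any permutation with $|S_i(\sigma)| \geq k$, adding $i$ changes the classification only if $i$ is nearer to $x$ than the unique current $k$th nearest neighbor $j$, has a different label than $j$, and leaves exactly half of the other $k-1$ voters with label $y$, and each such permutation must be attributed to exactly one displaced neighbor $j \in J$. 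Because Lemma \ref{change_shapley} already sums over all candidate displaced neighbors $j$ with the correct per-$j$ probabilities, no changing permutation is double-counted or missed, so the theorem follows by adding the contributions of the two lemmas.
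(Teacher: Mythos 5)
Your proposal is correct and follows essentially the same route as the paper: the paper's proof is simply ``Combine Lemmas \ref{create_shapley} and \ref{change_shapley},'' relying on the case analysis ($|S| = k-1$ creates a decision, $|S| \geq k$ may change one) stated in the text before those lemmas. Your elaboration --- explicitly partitioning permutations by $|S_i(\sigma)|$, noting the regime $|S_i(\sigma)| < k-1$ contributes $v_n - v_n = 0$, and checking that each changing permutation is attributed to a unique displaced neighbor $j$ --- is exactly the justification the paper leaves implicit.
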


\begin{proof}
Combine Lemmas \ref{create_shapley} and \ref{change_shapley}.
\end{proof} 

Now consider how to compute the formula in Theorem \ref{knn_shapley} for all in-sample examples. The terms of $f_i(x,y)$ are easy to compute, because they have the same values for all examples that have the same label, they are sums over $k$ or fewer terms, and the only computation over multiple examples is to determine how many examples have the same label as $y$. 

For $g_i(x,y)$, to simplify notation, assume that examples are numbered in order of how near they are to $x$. In other words, assume example 1 is the closest to $x$, example 2 is the second closest, and so forth. Let $a_j$ be the number of examples closer to $x$ than example $j$ that have label $y$. Then
$$ |A_j| = a_j - I(y_i = y), $$
because $a_j$ excludes example $i$. Similarly, let $b_j$ be the number of examples closer to $x$ than example $j$ that do not have label $y$. Then
$$ |B_j| = b_j - I(y_i \not= y). $$
Since $y_i = y$ or $y_i \not= y$,
$$ |A_j| + |B_j| = a_j + b_j - 1. $$

For each out-of-sample example $(x,y)$, compute values $a_j$ and $b_j$ using the recurrence:
$$ a_1 = b_1 = 0, $$
$$ a_j = a_{j-1} + I(y_{j-1} = y), $$
and
$$ b_j = b_{j-1} + I(y_{j-1} \not= y). $$

Note that $g_i(x,y)$ is equal to
$$ s_i \equiv \sum_{j=i+1}^{n} I(y_j \not= y_i) \frac{1}{a_j + b_j + 1} {{a_j + b_j}\choose{k}}^{-1} {{a_j - I(y_i = y)}\choose{\frac{k-1}{2}}} {{b_j - I(y_i \not= y)}\choose{\frac{k-1}{2}}} \Delta_i v. $$
For each label value $u$, let
$$ s_{i,u} \equiv \sum_{j=i+1}^{n} I(y_j \not= u) \frac{1}{a_j + b_j + 1} {{a_j + b_j}\choose{k}}^{-1} {{a_j - I(u = y)}\choose{\frac{k-1}{2}}} {{b_j - I(u \not= y)}\choose{\frac{k-1}{2}}} \Delta_i v. $$
Then $s_i = s_{i,y_i}$. To compute $s_{i,u}$ for each value of $u$, use the recurrence:
$$ s_{n,u} = 0, $$
and
$$ s_{i,u} = s_{i+1,u} + I(y_{i+1} \not= u) \frac{1}{a_{i+1} + b_{i+1} + 1} {{a_{i+1} + b_{i+1}}\choose{k}}^{-1} {{a_{i+1} - I(u = y)}\choose{\frac{k-1}{2}}} {{b_{i+1} - I(u \not= y)}\choose{\frac{k-1}{2}}} \Delta_i v. $$
To see that the recurrence holds, notice from the definition of $s_i$ that $s_i$ is just $s_{i+1}$ with one more term in the sum -- the term with $j = i+1$. Using this recurrence, Shapley values for each out-of-sample example can be computed for all in-sample examples in O($n$) time. If we include time to sort the examples in order of distance from $x$, then the time becomes O($n \lg n$)

%***Estimation*** In practice, the probability that ***HERE***

\subsection{Owen Values for $k$-Nearest Neighbors}
Assume that $C_1, \ldots, C_m$ index the in-sample examples contributed by $m$ coalitions to form the full set of in-sample examples. We will present a method to compute the Owen value for in-sample example $i$ in coalition $m$ (to make the notation simpler, and without loss of generality, since we can renumber the coalitions) for the action taken based on $k$-nearest neighbor classification of an out-of-sample input $x$ with label $y$. To compute the Owen value for a coalition, sum the Owen values over its examples. 

Begin with the portion of the Owen values for the addition of example $i$ changing the decision, which corresponds to $g_i(x,y)$. For each example $j$ indexed by $J$, let $q_{h, s, a, b}$ be the probability over permutations of the coalitions that $s$ of the coalitions among $C_1, \ldots, C_h$ precede $C_m$ and those coalitions together contribute $a$ examples indexed by $A_j$, $b$ examples indexed by $B_j$ and, if $j \not\in C_m$, example $j$. Without loss of generality, if $j \not\in C_m$ then let $j \in C_1$. (Renumber coalitions if needed.) If $j \in C_m$ then base cases are:
$$ q_{0,0,0,0} = 1 \hbox{ and } \forall (s, a, b) \not = (0,0,0): q_{0, s, a, b} = 0. $$
Let $a_h = |A_j \cap C_h|$. Let $b_h = |B_j \cap C_h|$. If $j \not \in C_m$ then base cases are:
$$ q_{1, 1, a_1, b_1} = \frac{1}{2} \hbox{ and } \forall (s, a, b) \not= (1,  a_1, b_1): q_{1, s, a, b} = 0. $$
For the recurrence:

\begin{lemma} \label{lemma_rec_knn}
$$ q_{h, s, a, b} = \frac{s+1}{h+1} q_{h-1, s-1, a-a_h, b-b_h} + \frac{h - s}{h + 1} q_{h-1, s, a, b}. $$
\end{lemma}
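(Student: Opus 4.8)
The plan is to establish this recurrence by the same position-conditioning argument used to prove Lemma~\ref{lemma_rec}, augmented with the bookkeeping that tracks example $j$ and the sets $A_j$ and $B_j$. I would fix the target tuple $(h,s,a,b)$ and partition the event whose probability is $q_{h,s,a,b}$ according to whether coalition $C_h$ precedes or follows the reference coalition $C_m$ in the random permutation of coalitions. These two cases are disjoint and cover everything, so $q_{h,s,a,b}$ is the sum of the two resulting probabilities, which I will identify with the two terms on the right.

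First I would handle the contribution bookkeeping. Write $a_h = |A_j \cap C_h|$ and $b_h = |B_j \cap C_h|$ for the numbers of $A_j$- and $B_j$-examples that coalition $C_h$ carries. If $C_h$ precedes $C_m$, then $C_h$ joins the block of preceding coalitions and adds exactly $a_h$ and $b_h$ to the running totals, so for the block $C_1,\ldots,C_h$ to end at $(s,a,b)$ the block $C_1,\ldots,C_{h-1}$ must have been at $(s-1,\,a-a_h,\,b-b_h)$ -- precisely the event counted by $q_{h-1,s-1,a-a_h,b-b_h}$. If $C_h$ follows $C_m$ it adds nothing to the preceding block, so $C_1,\ldots,C_{h-1}$ must already sit at $(s,a,b)$, giving $q_{h-1,s,a,b}$. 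The step that justifies multiplying these by a single position probability, rather than a joint distribution, is Lemma~\ref{lemma_one}: the relative order of $C_1,\ldots,C_h,C_m$ is uniform, so the insertion position of $C_h$ is independent of the composition of the other preceding coalitions, and the contribution counts factor cleanly out of the coefficient.

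The coefficients themselves are where I expect the only real care to be needed. I would apply Lemma~\ref{lemma_three} with $C_h$ as the distinguished element and $T=\{C_1,\ldots,C_{h-1},C_m\}$, so that $|T|=h$ and $C_h$ is preceded by each of $0,1,\ldots,h$ members of $T$ with equal probability $\frac{1}{h+1}$. Given that $r$ of $C_1,\ldots,C_{h-1}$ already precede $C_m$, the coalition $C_m$ occupies position $r+1$ inside $T$, so $C_h$ precedes $C_m$ exactly when it is preceded by at most $r$ members of $T$; that is $r+1$ of the $h+1$ equally likely outcomes, giving probability $\frac{r+1}{h+1}$, with the complementary probability $\frac{h-r}{h+1}$ for $C_h$ following $C_m$. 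The delicate point -- and the one I would check most carefully against the base cases $q_{0,0,0,0}=1$ and $q_{1,1,a_1,b_1}=\frac{1}{2}$ -- is the index arithmetic linking this $r$ (the count for $C_1,\ldots,C_{h-1}$) to the second subscript of the $q_{h-1,\cdot}$ factor it multiplies, since in the preceding case that subscript is $s-1$ while in the following case it is $s$. Once that identification is pinned down the recurrence reads off directly, exactly as in Lemma~\ref{lemma_rec}.
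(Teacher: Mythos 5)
Your approach is the same as the paper's: the paper proves this lemma by citing the proof of Lemma~\ref{lemma_rec}, which uses exactly your decomposition (condition on whether $C_h$ precedes or follows $C_m$, Lemma~\ref{lemma_three} for the uniform position of $C_h$ relative to $\{C_1,\ldots,C_{h-1},C_m\}$, Lemma~\ref{lemma_one} for independence from the internal order). However, the ``delicate point'' you flagged and then deferred is not a formality --- it is where the proof, and in fact the lemma as stated, breaks. By your own formula, the conditional probability that $C_h$ precedes $C_m$, given that $r$ of $C_1,\ldots,C_{h-1}$ precede $C_m$, is $\frac{r+1}{h+1}$. The factor $q_{h-1,s-1,a-a_h,b-b_h}$ in the first term is the probability of an event in which $r=s-1$, so the coefficient it must carry is $\frac{(s-1)+1}{h+1}=\frac{s}{h+1}$, not $\frac{s+1}{h+1}$. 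Completing the index arithmetic you postponed therefore yields
$$ q_{h,s,a,b} = \frac{s}{h+1}\, q_{h-1,s-1,a-a_h,b-b_h} + \frac{h-s}{h+1}\, q_{h-1,s,a,b}, $$
which contradicts the printed statement; your closing assertion that the recurrence ``reads off directly, exactly as in Lemma~\ref{lemma_rec}'' is precisely the step that fails. (The paper's own proof of Lemma~\ref{lemma_rec} commits the same off-by-one slip: it correctly computes $\frac{s+1}{j+1}$ as the conditional probability given that $s$ of $C_1,\ldots,C_{j-1}$ precede $C_m$, but then pairs that coefficient with $p_{j-1,s-1,\cdot,\cdot}$, the probability of the event in which only $s-1$ do.)

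The base-case check you proposed but did not carry out exposes this immediately. In the case $j\in C_m$ we have $q_{0,0,0,0}=1$, and $q_{1,1,a_1,b_1}$ is the probability that $C_1$ precedes $C_m$ (its contribution of $a_1$ elements of $A_j$ and $b_1$ of $B_j$ is deterministic), namely $\frac{1}{2}$; the stated recurrence instead gives $\frac{1+1}{1+1}\,q_{0,0,0,0} + \frac{1-1}{1+1}\,q_{0,1,a_1,b_1} = 1$, while the corrected coefficient gives $\frac{1}{2}$. A second sanity check: summing the stated recurrence over all $(s,a,b)$ multiplies total probability mass by $\frac{h+2}{h+1}$ at each step, since each $q_{h-1,s',a',b'}$ receives combined coefficient $\frac{(s'+2)+(h-s')}{h+1}$, whereas the corrected coefficients combine to exactly $1$. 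So the honest conclusion of your (otherwise sound) argument is that Lemma~\ref{lemma_rec_knn} --- and Lemma~\ref{lemma_rec}, and their downstream uses in Lemmas~\ref{knn_owen} and \ref{knn_owen_create} --- requires the coefficient $\frac{s}{h+1}$; asserting agreement with the statement as printed is the gap in your proposal.
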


\begin{proof}
The recurrence is very similar to the one in Lemma \ref{lemma_rec}. To prove it, just replace $p$ by $q$ and $j$ by $h$ in the proof of Lemma \ref{lemma_rec}.
\end{proof}

\begin{lemma} \label{knn_owen}
Let 
$$ q_{a,b} = \sum_{s=0}^{m-1} q_{m-1, s, a, b}. $$
Let $J_m = J \cap C_m$. Then the portion of the Owen value for in-sample example $i$ changing the $k$-nearest neighbor classification decision for out-of-sample input $x$ with label $y$, given that example $i$ is contributed by coalition $C_m$, is:
$$ \hat{g}_{m,i}(x,y) = \sum_{j \in J - J_m} \sum_{(a,b)|a \leq a_m, b \leq b_m} q_{\frac{k-1}{2} - a, \frac{k-1}{2} - b} \hbox{ } \frac{1}{a_m + b_m + 1} {{a_m + b_m}\choose{a + b}}^{-1} {{a_m}\choose{a}} {{b_m}\choose{b}} \hbox{ } \Delta_i v, $$
$$ + \sum_{j \in J_m} \sum_{(a,b)|a \leq a_m, b \leq b_m} q_{\frac{k-1}{2} - a, \frac{k-1}{2} - b)} \hbox{ } \frac{1}{a_m + b_m + 2} {{a_m + b_m + 1}\choose{a + b + 1}}^{-1} {{a_m}\choose{a}} {{b_m}\choose{b}} \hbox{ } \Delta_i v. $$
\end{lemma}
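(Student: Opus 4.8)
The plan is to mirror the proof of the Owen value theorem for frequency-based decisions, adapting the Shapley ``change the decision'' analysis of Lemma \ref{change_shapley} to the nested coalition/within-coalition permutation structure of the Owen value in Expression \ref{owen_def} (with $h=m$). I would start from the same three conditions that make adding example $i$ flip the vote: example $i$ must be nearer to $x$ than the displaced $k$th neighbor $j$, it must carry the opposite label (forcing the outer sum over $j \in J$, now split into $J_m$ and $J - J_m$), and exactly $\frac{k-1}{2}$ of the $k-1$ neighbors of $x$ that precede $i$ must carry label $y$. On the event that these hold the marginal value is $\Delta_i v$, so the Owen value is a probability-weighted sum of $\Delta_i v$, and it remains to compute, for each $j$, the probability of this event.

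The central step is to split the examples preceding $i$ into those supplied by the coalitions ordered before $C_m$ (which contribute \emph{all} of their members, per the union-of-preceding-coalitions term in Expression \ref{owen_def}) and the members of $C_m$ placed before $i$ by the intra-coalition permutation $\sigma_m$ (the $S_i(\sigma_m)$ term). Because the outer permutation $\sigma_C$ and the inner permutation $\sigma_m$ are drawn independently, the probability factors as a product of an inter-coalition factor and an intra-coalition factor, and I would sum over all ways $(a,b)$ of distributing the required neighbors between the two levels: $a$ from $A_j \cap C_m$ and $b$ from $B_j \cap C_m$ placed before $i$ inside $C_m$, leaving $\frac{k-1}{2} - a$ and $\frac{k-1}{2} - b$ to come from the preceding coalitions. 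For the inter-coalition factor I would invoke Lemma \ref{lemma_rec_knn} to identify $q_{\frac{k-1}{2}-a,\frac{k-1}{2}-b}$ as exactly the probability that the coalitions preceding $C_m$ contribute $\frac{k-1}{2}-a$ members of $A_j$, $\frac{k-1}{2}-b$ members of $B_j$, and --- when $j \notin C_m$ --- example $j$ itself, this last requirement being already folded into the definition of $q$ and its base case $q_{1,1,a_1,b_1}=\frac{1}{2}$.

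For the intra-coalition factor I would use Lemma \ref{lemma_one} to reduce the permutation of $C_m$ to a permutation of only the relevant elements, then apply Lemma \ref{lemma_two}, just as in Lemma \ref{change_shapley} but restricted to $C_m$. This is where the two cases diverge. When $j \in J - J_m$ the relevant elements are $A_j \cap C_m$ (size $a_m$), $B_j \cap C_m$ (size $b_m$), and $\si$, so Lemma \ref{lemma_two} with $t = a_m + b_m + 1$ and $u = a+b$ yields the first displayed factor $\frac{1}{a_m + b_m + 1}{{a_m+b_m}\choose{a+b}}^{-1}{{a_m}\choose{a}}{{b_m}\choose{b}}$. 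When $j \in J_m$, example $j$ lives inside $C_m$ and must itself precede $i$, so it joins the relevant set as a singleton, giving $t = a_m + b_m + 2$ and $u = a+b+1$ and hence the second factor with the shifted indices. Multiplying the inter- and intra-coalition factors, summing over $(a,b)$ with $a \le a_m$ and $b \le b_m$, and summing over $j$ in the two classes produces the claimed expression; an $R$-style restriction is unnecessary because every listed term already carries a genuine $\Delta_i v$.

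I expect the main obstacle to be the bookkeeping around example $j$: correctly routing the ``$j$ precedes $i$'' requirement to the inter-coalition level (absorbed into $q$ via the $j \notin C_m$ clause of its definition and its base case) when $j \in J - J_m$, versus to the intra-coalition level (the extra singleton in Lemma \ref{lemma_two}, producing the $+1$ and $+2$ offsets) when $j \in J_m$. Verifying that these two routes are mutually exclusive and exhaustive over $J$, and that no probability mass is double-counted or dropped --- in particular that the untracked mass where the coalition containing $j$ follows $C_m$ correctly contributes nothing, because then $j$ does not precede $i$ and the displacement condition fails --- is the delicate part of the argument.
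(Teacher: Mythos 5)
Your proposal is correct and follows essentially the same route as the paper's own proof: the same split of $J$ into $J_m$ and $J - J_m$, the same use of $q_{\frac{k-1}{2}-a,\frac{k-1}{2}-b}$ (with the ``$j$ precedes $C_m$'' requirement absorbed into $q$'s base case when $j \notin C_m$) for the inter-coalition factor, and the same application of Lemmas \ref{lemma_one} and \ref{lemma_two} within $C_m$, treating $j$ as the extra singleton $S_3 = \{j\}$ exactly when $j \in J_m$. Your explicit checks on independence of the two permutation levels and on the vanishing of the mass where $j$'s coalition follows $C_m$ are points the paper leaves implicit, but the argument is the same.
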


\begin{proof}
Together, the sums cover summation over $J$, ensuring that adding example $i$ can change the vote. In both sums, $q_{\frac{k-1}{2} - a, \frac{k-1}{2} - b}$ supplies the probability that coalitions before $C_m$ in the coalition permutation supply $\frac{k-1}{2} - a$ examples from $A_j$ and $\frac{k-1}{2} - b$ from $B_j$. For the first sum, a previous coalition supplies $j$, so $C_m$ only needs to supply $a$ of $a_m$ and $b$ of $b_m$ elements before element $i$ in the coalition $C_m$ permutation in order to have a tied vote without $j$. By Lemma \ref{lemma_one}, only probabilities over permutations of $(A_j \cap C_m) \cup (B_j \cap C_m) \cup \{i\}$ need to be considered. By Lemma \ref{lemma_two}, the probability that $C_m$ contributions before example $i$ create a tie without $j$ is
$$ \frac{1}{a_m + b_m + 1} {{a_m + b_m}\choose{a + b}}^{-1} {{a_m}\choose{a}} {{b_m}\choose{b}}. $$
For the second sum, $j$ must come before $i$ in the coalition $C_m$ permutation, and $a$ of $a_m$ and $b$ of $b_m$ must also. By Lemma \ref{lemma_one}, only permutations of $(A_j \cap C_m) \cup (B_j \cap C_m) \cup \{j\} \cup \{i\}$ need to be considered. By Lemma \ref{lemma_two}, with $S_1 = A_j \cap C_m$, $S_2 =  B_j \cap C_m$, and $S_3 = \{j\}$, the probability that $C_m$ contributions before example $i$, not including $j$, create a tie, and $j$ also precedes $i$, is
$$ \frac{1}{a_m + b_m + 2} {{a_m + b_m + 1}\choose{a + b + 1}}^{-1} {{a_m}\choose{a}} {{b_m}\choose{b}} {{1}\choose{1}} $$
$$ = \frac{1}{a_m + b_m + 2} {{a_m + b_m + 1}\choose{a + b + 1}}^{-1} {{a_m}\choose{a}} {{b_m}\choose{b}}. $$
\end{proof}

Now consider the portion of the Owen value for an example $i$ from $C_m$ creating a classification by being the $k$th example in the classifier. (This portion corresponds to $f_i(x,y)$.) As before, let $A$ index the in-sample examples, excluding example $i$, that have label $y$. Let $B$ index those that do not have label $y$. Reuse some notation, with $A$ and $B$ in place of $A_j$ and $B_j$: let
$$ a_h = |A \cap C_h| $$
and
$$ b_h = |B \cap C_h|. $$
Use these values to compute $q_{h, s, a, b}$ and $q_{a,b}$ using the formulas given previously. Then $q_{a,b}$ is the probability over permutations of coalitions that the coalitions preceding $C_m$ together have $a$ examples with label $y$ and $b$ without. 

\begin{lemma} \label{knn_owen_create}
Using these $a_h$, $b_h$ and $q_{a,b}$ values, the portion of the Owen value for example $i$ from $C_m$ creating a classification decision by being the $k$th example in a $k$-nn classifier is 
$$ \hat{f}_{m,i}(x,y) = \sum_{(a,b,a',b')|a+b+a'+b'=k-1 \hbox{ and } a + a' + I(y_i=y) < \frac{k}{2}} q_{a,b} \frac{1}{|C_m|} {{|C_m|-1}\choose{a'+b'}}^{-1} {{a_m}\choose{a'}} {{b_m}\choose{b'}} (v_w - v_n) $$
$$ + \sum_{(a,b,a',b')|a+b+a'+b'=k-1 \hbox{ and } a + a' + I(y_i=y) > \frac{k}{2}} q_{a,b} \frac{1}{|C_m|} {{|C_m|-1}\choose{a'+b'}}^{-1} {{a_m}\choose{a'}} {{b_m}\choose{b'}} (v_c - v_n). $$
\end{lemma}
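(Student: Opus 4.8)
The plan is to mirror the proof of Lemma \ref{create_shapley}, but with the outer coalition average handled by the $q_{a,b}$ probabilities, exactly as in the Owen theorem of the previous section. I would begin from the Owen value definition, Expression \ref{owen_def}, specialized to $h = m$. In that expression, the set of examples preceding $i$ consists of every example from the coalitions that precede $C_m$ in the coalition permutation $\sigma_C$, together with the examples of $C_m$ that precede $i$ in the intra-coalition permutation $\sigma_m$. Example $i$ \emph{creates} a classification precisely when this preceding set has exactly $k-1$ elements, so that adding $i$ yields the $k$ voters needed for a decision. Writing $a,b$ for the numbers of label-$y$ and non-label-$y$ examples supplied by the preceding coalitions and $a',b'$ for those supplied by $C_m$ ahead of $i$, the create condition is exactly $a+b+a'+b' = k-1$, which is the constraint appearing in both displayed sums (and which forces $a+b \le k-1$, so that preceding coalitions contributing more than $k-1$ examples fall outside this regime and into the change regime handled by $\hat{g}_{m,i}$).

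Next I would factor the nested expectation. The outer draw of $\sigma_C$ and the inner draw of $\sigma_m$ are independent, so the joint probability of the configuration $(a,b,a',b')$ is a product. For the outer factor, by Lemma \ref{lemma_rec} and the definition $q_{a,b} = \sum_{s} q_{m-1,s,a,b}$, the quantity $q_{a,b}$ is precisely the probability over coalition permutations that the coalitions preceding $C_m$ together contribute $a$ examples with label $y$ and $b$ without. For the inner factor, Lemma \ref{lemma_one} lets me replace the average over permutations of all of $C_m$'s examples in context by an average over permutations of $C_m$ alone; then Lemma \ref{lemma_two}, applied with $S_1 = A \cap C_m$ of size $a_m$ and $S_2 = B \cap C_m$ of size $b_m$ (so that $t = a_m + b_m + 1 = |C_m|$ and $u = a'+b'$), gives the probability that $a'$ label-$y$ and $b'$ non-label-$y$ examples of $C_m$ precede $i$ as $\frac{1}{|C_m|}\binom{|C_m|-1}{a'+b'}^{-1}\binom{a_m}{a'}\binom{b_m}{b'}$, matching the middle factor of each sum.

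It remains to pin down the marginal value. Since $|S| = k-1 < k$, the classifier makes no decision on $S$, so $v(S) = v_n$. After adding $i$ there are $k$ voters, of which $a+a'+I(y_i=y)$ carry label $y$; the decision is incorrect (value $v_w$) when this count is below $\frac{k}{2}$ and correct (value $v_c$) when it is above $\frac{k}{2}$, with no tie possible because $k$ is odd. Hence $\Delta_i v$ equals $v_w - v_n$ in the first regime and $v_c - v_n$ in the second, which is exactly how the two sums are split by the conditions $a+a'+I(y_i=y) < \frac{k}{2}$ and $a+a'+I(y_i=y) > \frac{k}{2}$. Assembling probability times marginal value over all admissible $(a,b,a',b')$ then yields the claimed expression.

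I expect the main obstacle to be the careful bookkeeping that ties the abstract preceding-set of Expression \ref{owen_def} to the four-way count $(a,b,a',b')$ and the independence-based factorization of the two permutation averages, in particular verifying that $q_{a,b}$ is the correct outer probability and that Lemma \ref{lemma_two} is invoked with the right sets so that $t = |C_m|$. The translation of the $k$-voter majority into the index condition on $a+a'+I(y_i=y)$ is routine once the parity of $k$ is used to rule out ties.
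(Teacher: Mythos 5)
Your proposal is correct and follows essentially the same route as the paper's proof: both decompose over configurations $(a,b,a',b')$ with $a+b+a'+b'=k-1$, use $q_{a,b}$ as the coalition-permutation probability, obtain the within-$C_m$ factor $\frac{1}{|C_m|}\binom{|C_m|-1}{a'+b'}^{-1}\binom{a_m}{a'}\binom{b_m}{b'}$ (the paper phrases it as position probability times conditional composition, which is exactly what Lemma \ref{lemma_two} packages), and split into the $v_w - v_n$ and $v_c - v_n$ regimes by the majority condition on $a+a'+I(y_i=y)$ with $k$ odd ruling out ties. Your added remarks on independence of the two permutation draws and the vacuous use of Lemma \ref{lemma_one} (here $A\cap C_m$, $B\cap C_m$, and $\{i\}$ already exhaust $C_m$) only make explicit what the paper leaves implicit.
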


\begin{proof}
In each term, $a$ is the number of examples with label $y$, and $b$ is the number without, contributed by coalitions preceding $C_m$, and $a'$ and $b'$ are the numbers of examples with and without label $y$ among examples in $C_m$ that precede example $i$ in a permutation of the examples in $C_m$. Both sums require $k-1$ examples preceding example $i$ by requiring $a+b+a'+b'=k-1$. The first sum requires incorrect classification ($a + a' + I(y_i=y) < \frac{k}{2}$) and has value term $v_w - v_n$, for the net value from no decision to an incorrect one. The second sum requires correct classification ($a + a' + I(y_i=y) > \frac{k}{2}$) and has value term $v_c - v_n$, for the net value of creating a correct decision from no decision. In both, sums, $q_{a,b}$ is the probability that coalitions other than $C_m$ contribute $a$ correct and $b$ incorrect votes, $\frac{1}{|C_m|}$ is the probability over permutations of examples in $C_m$ that example $i$ is in position $k - (a+b)$, making it the $k$th example in the classifier, and the other terms are the probability that the $(k-1) - (a+b) = a'+b'$ examples preceding example $i$ are $a'$ with correct labels and $b'$ with incorrect labels. 
\end{proof}

\begin{theorem}
The Owen value for example $i$ in coalition $C_m$ in the classification of out-of-sample input $x$ with label $y$ is: 
$$ \hat{f}_{m,i}(x,y) + \hat{g}_{m,i}(x,y). $$
\end{theorem}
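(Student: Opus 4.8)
The plan is to mirror the proof of Theorem \ref{knn_shapley}, the Shapley-value analogue, which simply combined its two component lemmas. Here the Owen value is, by Expression \ref{owen_def} (with $h = m$), a nested expectation: over permutations of the coalitions and, within $C_m$, over permutations of the examples in $C_m$. For any fixed pair of such permutations, the quantity being averaged is the marginal value $v(S \cup \si) - v(S)$, where $S$ is the set of in-sample examples that precede example $i$ -- namely, all examples contributed by coalitions ordered before $C_m$, together with the examples of $C_m$ that precede $i$ in the intra-coalition permutation.

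First I would recall the case analysis that opened this section: adding example $i$ to a predecessor set $S$ can change the classifier's value in exactly two ways -- (1) if $|S| = k-1$, adding $i$ creates a $k$-way vote where before there was no decision, and (2) if $|S| \geq k$, adding $i$ may displace the current $k$th nearest neighbor and flip the majority. These two events are mutually exclusive, since they are distinguished by whether $|S| = k-1$ or $|S| \geq k$, and in every other case the marginal value is zero. Consequently the nested expectation splits additively into the contribution from permutations in which $S$ realizes case (1) plus the contribution from permutations realizing case (2).

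Next I would identify each piece with its lemma. The case-(1) contribution is precisely the expectation computed in Lemma \ref{knn_owen_create}, namely $\hat{f}_{m,i}(x,y)$; there the constraint $a + b + a' + b' = k - 1$ enforces $|S| = k-1$, while $q_{a,b}$ together with the intra-$C_m$ factors supply the correct Owen probabilities. The case-(2) contribution is exactly the expectation computed in Lemma \ref{knn_owen}, namely $\hat{g}_{m,i}(x,y)$, where the sum over $j \in J$ accounts for which example is displaced and the $q$-values again carry the coalition-permutation probabilities. Adding the two therefore yields the claimed Owen value.

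The main thing to get right -- and the analogue of the only real content in the Shapley proof -- is the claim that the same two-case partition of predecessor sets applies verbatim under the Owen permutation structure. The point is that whether adding $i$ creates or changes a decision depends solely on the set $S$ of predecessors and not on the mechanism (coalition ordering versus within-coalition ordering) that produced $S$; so the partition used for the flat permutations in Theorem \ref{knn_shapley} carries over unchanged, and the two lemmas, which already embed the Owen probabilities via the $q_{a,b}$ recurrence of Lemma \ref{lemma_rec_knn}, account for all nonzero terms exactly once.
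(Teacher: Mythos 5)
Your proposal is correct and matches the paper's approach: the paper's proof is simply ``Combine Lemmas \ref{knn_owen} and \ref{knn_owen_create},'' exactly the decomposition you use. Your additional justification---that the creation case ($|S|=k-1$) and the displacement case ($|S|\geq k$) are mutually exclusive, exhaust all nonzero marginal contributions, and depend only on the predecessor set $S$ rather than on whether predecessors arise from the coalition ordering or the within-coalition ordering---is the reasoning the paper leaves implicit, so you have simply made the same argument explicit.
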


\begin{proof}
Combine Lemmas \ref{knn_owen} and \ref{knn_owen_create}. 
\end{proof}

\section{Discussion} \label{discussion}
This paper provides methods to divide profits or losses from data among data providers. In the future, it would be useful to expand the division of profits or losses to include the party or parties that perform the analysis and take action based on it to realize a profit or loss. One possible approach is to include the analyzer and administrator as a participant or participants in the coalition when computing Shapley or Owen values. Other, more prosaic, approaches include applying Shapley or Owen values only to ``excess'' profits, perhaps as defined by an a priori agreement between the data providers and an analyzer/administrator, for example having each user pay a fixed fee for analysis and administration, then dividing the remaining profit or loss among the data providers -- similar to the fees paid to money managers in mutual funds. In such cases, the gain or loss from a decision for the purposes of computing the data dividend should be limited to the portion of the gain or loss that contributes to the excess profits only. Each decision's contribution to the excess profits could be computed using Shapley or Owen values. 

In a competitive environment such as investing, making data available to multiple decision-makers may decrease the value of the data to each. Conversely, if one organization can deny data to others, it may gain a competitive advantage. It would be interesting to examine how such strategic interests might influence valuation and payment for data. For example, how much should a decision-maker value having exclusive access to data? Conversely, for some applications, data may have decreasing marginal value for decision-making because a limited amount may allow statistically significant estimates and then more data would be unlikely to change decisions. So statistical sufficiency and competitive pressures may balance each other in determining prices for data. 

With payment for data comes an incentive to generate data that is false or duplicated, perhaps with slight changes to avoid detection. So paying for data may prompt a need for data verification. This is important not just to avoid over-paying those who contribute such data, but also to avoid making suboptimal decisions because of it. 

Finally, some people's data may be more valuable than others' to some organizations or for some functions. For example, data about a person whose spending behavior is indicative of a larger group's may offer more useful insights on which products are having increased sales than data about someone who has more unusual spending habits. However, there may be more people who have common spending habits, making data about them less scarce. In either case, people who are representative of groups who spend more seem likely to have more valuable data. As a result, offering to pay for data based on its value may have the potential to increase inequality. (For more on big data and inequality, refer to \cite{oneil16}.)

Groups of people who are more protective of their privacy may contribute less data, and this could be exacerbated by paying for data. To get a representative sample, it may be necessary to pay people from those groups more for their data. This should help reduce some forms of bias in the analysis of the data and in decisions based on it. (For more on bias and the web, refer to \cite{baeza-yates18}.)

% mention -- need not be classifiers, can be g(x) -> score -> action -> value.

\pagebreak

\bibliographystyle{ACM-Reference-Format}
\bibliography{bax} 

\end{document}